\newtheorem{dfn}{Definition}
\newtheorem{theorem}{Theorem}
\newtheorem{lemma}[theorem]{Lemma}
\crefname{algocf}{alg.}{algs.}
\Crefname{algocf}{Algorithm}{Algorithms}
\def\lc{\left\lceil}   
\def\rc{\right\rceil}
\newcommand{\nonl}{\renewcommand{\nl}{\let\nl\oldnl}}
\newcommand{\nosemic}{\renewcommand{\@endalgocfline}{\relax}}
\renewcommand\footnotetextcopyrightpermission[1]{}
  \providecommand\BibTeX{{%
    \normalfont B\kern-0.5em{\scshape i\kern-0.25em b}\kern-0.8em\TeX}}}
\begin{document}

\title{Locality-Sensitive Experience Replay for Online Recommendation}

\author{Xiaocong Chen}
\email{xiaocong.chen@unsw.edu.au}
\affiliation{%
  \institution{University of New South Wales}
  \city{Sydney}
  \country{Australia}
}

\author{Lina Yao}
\email{lina.yao@unsw.edu.au}
\affiliation{%
  \institution{University of New South Wales}
  \city{Sydney}
  \country{Australia}
}

\author{Xianzhi Wang}
\email{xianzhi.wang@uts.edu.au}
\affiliation{%
  \institution{University of Technology Sydney}
  \city{Sydney}
  \country{Australia}
}

\author{Julian McAuley}
\email{jmcauley@eng.ucsd.edu}
\affiliation{%
  \institution{University of California, San Diego}
  \state{CA}
  \country{USA}
}

\renewcommand{\shortauthors}{X.Chen, et al.}

\begin{abstract}
Online recommendation requires handling rapidly changing user preferences. Deep reinforcement learning (DRL) 
is an effective means of
capturing users' dynamic interest during interactions with recommender systems.
Generally, it is challenging to train a DRL agent, due to large state space (e.g., user-item rating matrix and user profiles), action space (e.g., candidate items), and sparse rewards.
Existing studies leverage experience replay (ER) to let an agent learn from past experience. However, they adapt poorly to the complex environment of online recommender systems and are inefficient in determining an optimal strategy from past experience.
To address these issues, we design a novel state-aware experience replay model, which selectively selects the most relevant, salient experiences, and 
recommends the agent with the optimal policy for online recommendation.
In particular, the model uses locality-sensitive hashing to map high dimensional data into low-dimensional representations and a prioritized reward-driven strategy to replay more valuable experience at a higher chance.   
Experiments on three online simulation platforms demonstrate our model's feasibility and superiority to several existing experience replay methods. 
\end{abstract}



\keywords{Recommender System, Deep Reinforcement Learning, Experience Replay}


\maketitle

\section{Introduction}
Online recommendation aims to
learn 
users'
preferences and recommend items dynamically to help users find 
desired 
items in highly dynamic environments~\cite{zhang2019deep}.
Deep reinforcement learning (DRL) naturally fits online recommendation as it learns policies through interactions with the environment via maximizing a cumulative reward.
Besides, DRL has been widely applied to sequential decision-making (e.g. in~Atari~\cite{mnih2013playing} and AlphaGo~\cite{silver2016mastering}) and achieved remarkable progress. Therefore, it is increasing applied for enhancing online recommender systems~\cite{chen2018stabilizing,bai2019model,zhao2019deep}.

DRL-based recommender systems 
cover three categories of methods:
deep Q-learning (DQN), policy gradient, and hybrid methods.
DQN aims to find the best step via maximizing a Q-value over all possible actions. As the representatives, \citet{zheng2018drn} 
introduced DRL into recommender systems for news recommendation; \citet{chen2018stabilizing} introduced a robust reward function to Q-learning, which stabilized the reward in online recommendation.
Despite the capability of
fast-indexing in selecting a discrete action, Q-learning-based methods conduct the ``maximize" operation over the action space (i.e., all available items) and suffer from the 
\textit{stuck agent problem}~\cite{dulac2015deep}---the ``maximize" operation becomes unfeasible when the action space has high dimensionality (e.g., 100,000 items form a 10k-dimensional action space)~\cite{chen2019large}.
Policy-gradient-based methods use the average reward as guideline to mitigate the 
stuck agent
problem~\cite{chen2019large}. However, they
are prone to converge to sub-optimality~\cite{pan2019policy}. 
While 
both DQN and policy gradient are more suitable for small action and state spaces~\cite{lillicrap2015continuous,wang2018supervised} in a recommendation context, hybrid methods~\cite{chen2019large,dulac2015deep, zhao2018deep, hu2018reinforcement} has the capability to map large high-dimensional discrete state spaces into low-dimensional continuous spaces via combines the advantages of Q-learning and policy gradient.
A typical hybrid method is the actor-critic network~\cite{konda2000actor}, which adopts policy gradient on an actor network and Q-learning on a critic network to achieve Nash equilibrium on both networks. Actor-critic networks have been widely applied to DRL-based recommender systems~\cite{liu2020end,chen2020knowledge}.  

Existing DRL-based recommendation methods except policy-gradient-based ones 
rely heavily
on experience replay to learn from previous experience, avoid re-traversal of the state-action space, and stabilize the training on large, sparse state and action spaces~\cite{zha2019experience}.
They generally require long training time, thus
suffering from the training inefficiency problem.
Further more, in contrast to the larger, diverse pool of continuous actions required in recommendation tasks, existing experience replay methods are mostly designed for games with a small pool of discrete actions. Therefore, a straightforward application of those methods may result in strong biases during the policy learning process~\cite{hou2017novel}, thus impeding the generalization of optimal recommendation results.
For example,
\citet{schaul2015prioritized} assume that not every experience is worth replaying and
propose a
prioritized experience replay (PER) method to replay only the experience with the largest temporal difference error.
\citet{sun2020attentive} propose attentive experience replay (AER), which introduces similarity measurement into PER to boost the efficiency of finding similar states' experience, but attention mechanisms cause inefficiency on large sized state and action spaces~\cite{kitaev2020reformer}.

We present a novel experience replay structure, Locality-Sensitive Experience Replay (LSER), to address the above challenges.
Differing from existing approaches, which apply random or uniform sampling,
LSER samples experiences based on expected states. Inspired by collaborative filtering (which measures the similarity between users and items to make recommendations) and AER~\cite{sun2020attentive}, LSER only
replays experience from 
similar states to improve the sampling efficiency. 
Specifically, we introduce a ranking mechanism to prioritize replays and promote the higher reward experiences.  
We further use $\epsilon$-greedy method to avoid replaying high-rears states excessively.
\begin{figure*}
    \centering
    \includegraphics[width=\linewidth]{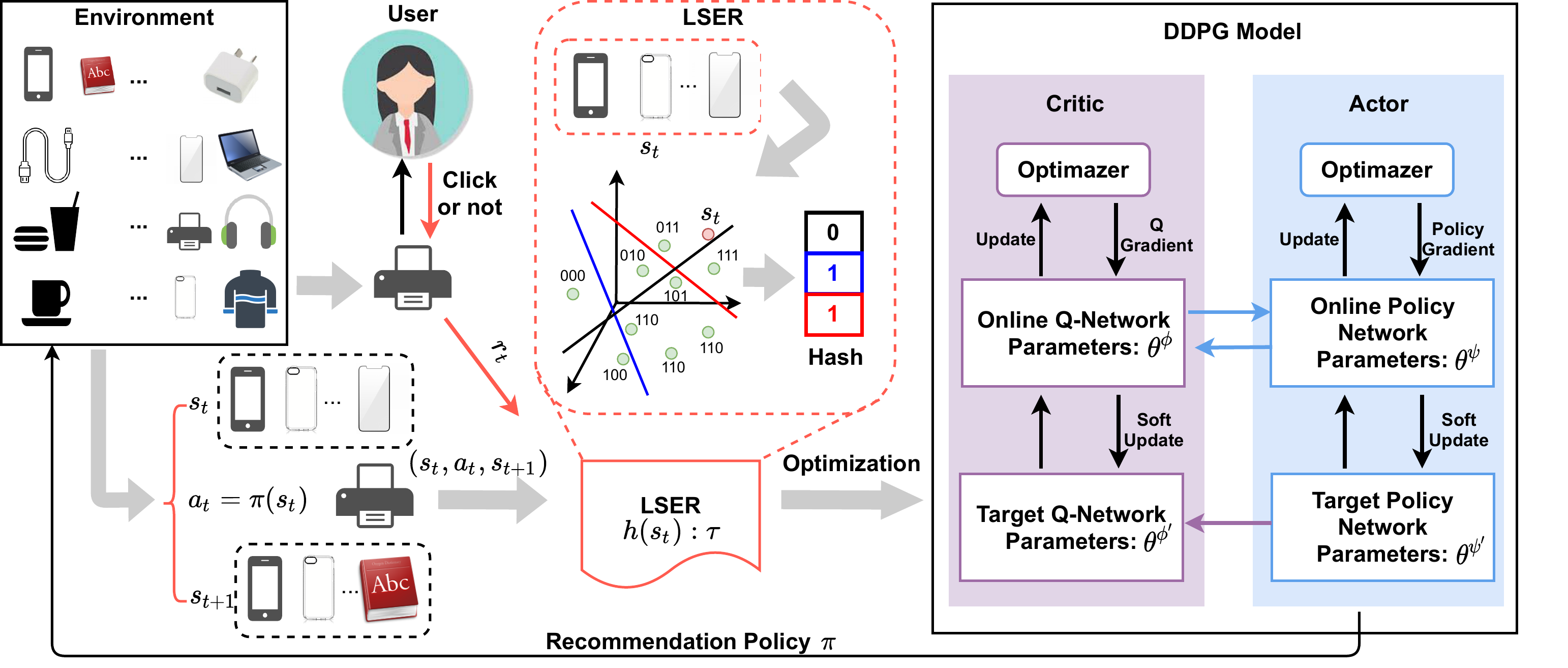}
    \caption{The proposed LSER with DDPG. The environment provides the current state $s_t$, with the policy $\pi$ learned by the DDPG model; the action $a_t$ can be obtained by $a_t = \pi(s_t)$. $r_t$ will be provided by the user 
    (e.g.~click or not).
    LSER takes $s_t$ as the input and encodes it on the projective space. Given the encoded states, LSER will return the most similar experience for DDPG to update he parameters.
    After that, this transition $h(s_t): (s_t, a_t,s_{t+1},r_t)$ will be stored.}
    \label{fig:structure}
\end{figure*}

Considering  the high-dimensionality of vectorized representations of states,
We convert similarity measurement for high-dimensional data into a hash key matching problem and employ \textit{locality-sensitive hashing} to transform states into
low-dimensional representations. Then, we assign similar vectors the same hash codes (based on the property of locality-sensitive hashing).
Such a transformation reduces all the states into low dimension hash keys.

In summary,
we make the following contributions in this paper:
\begin{itemize}
  \item We propose a novel experience replay method (LSER) for reinforcement-learning-based online recommendation. It employs a similarity measurement to improve 
  training efficiency.
  \item LSER replays experience based on the similarity level of the given state and the stored states;
  the agent thus has a higher chance to learn valuable information than it does with 
  uniform
  sampling.
  \item The experiments on three platforms, VirtualTB, RecSim and RecoGym, demonstrate the efficacy and superiority of LSER to several state-of-the-art experience replay methods.
\end{itemize}
\section{Methodology}
In this section, we will briefly introduce the proposed LSER method with theoretical analysis. The overall structure of using LSER in DRL RS can be found in~\Cref{fig:structure}.
\subsection{Overview}
{\em Online Recommendation} aims to find a solution that best reflects real-time interactions between users and the recommender system and apply the solution to the recommendation policy. The system needs to analyze users' behaviors and update the recommend policy dynamically.
In particular, reinforcement learning-based recommendation learns from interactions through a Markov Decision Process (MDP).

Given a recommendation problem consisting of a set of users $\mathcal{U} = \{u_0,u_1,\cdots u_n\}$, a set of items $\mathcal{I} = \{i_0,i_1,\cdots i_m\}$ and user's demographic information $\mathcal{D}=\{d_0,d_1,\cdots,d_n\}$, MDP can be represented as a tuple $(\mathcal{S},\mathcal{A},\mathcal{P},\mathcal{R},\gamma)$, where $\mathcal{S}$ denotes the state space (i.e., the combination of the subsets of $\mathcal{I}$ and its corresponding user information) $\mathcal{A}$ denotes the {\em action space}, which represents agent's selection during recommendation based on the {\em state space} $\mathcal{S}$, $\mathcal{P}$ denotes the set of transition probabilities for state transfer based on the action received, $\mathcal{R}$ is a set of rewards received from users, which are used to evaluate the action taken by the recommender system (each reward is a binary value to indicate whether user has clicked the recommended item or not), and $\gamma$ is a discount factor $\gamma \in [0,1]$ for the trade-off between future and current rewards.

Given a user $u$ and an initial state $s_0$ observed by the agent (or the recommender system), which includes a subset of item set $\mathcal{I}$ and user's profile information $d_0$, a typical recommendation iteration for the user goes as follows:
first, the agent takes an action $a_0$ based on the recommend policy $\pi_0$ under the observed state $s_0$ and receives the corresponding reward $r_0$---the reward $r_0$ is the numerical representation for user's behavior such as click through or not;
then, the agent generates a new policy $\pi_1$ based on the received reward $r_0$ and determines the new state $s_1$ based on the probability distribution $p(s_\textit{new}|s_0,a_0)\in\mathcal{P}$.
The cumulative reward (denoted by $r_c$) after $k$ iterations from the initial state is as follows:
\begin{align*}
  r_c = \sum_{k=0} \gamma^{k}r_k
\end{align*}

DRL-based recommender systems uses a \textit{replay buffer} to store and replay old experience for training. Given the large state and action space in a recommender system, not every experiences are worth to replay~\cite{chen2021survey}---replaying experience that does not contain useful information will increase the training time significantly and introduce extra uncertainty to convergence. Hence, it is reasonable to prioritize replay important experience for DRL recommender systems.

The ideal criterion for measuring the importance of a transition in RL is the amount of knowledge learnable from the transition in its current state ~\cite{schaul2015prioritized,horgan2018distributed}.
State-of-the-art methods like AER are unsuitable for recommendation tasks that contain large, higher dimensional state and action spaces as their sampling strategies may not work properly.
Thus, we propose a new experience replay method named \textit{Locality-sensitive experience replay (LSER)}  for online recommendation, which uses hashing for dimension reduction when sampling and storing the experiences.

\subsection{Locality-sensitive Experience Replay}
We formulate the
storage and sampling issue in LSER as a similarity measure problem, where LSER stores similar states into the same buckets and samples similar experiences based on state similarities.
A popular way of searching similar high-dimensional vectors in 
Euclidean space is Locality-Sensitive Hashing (LSH),
which follows the idea of Approximate Nearest Neighbor (ANN) while allocating similar items into the same buckets to measure the similarity.
However, 
standard LSH conducts bit-sampling on the Hamming space; it requires time-consuming transformation between the Euclidean space to the Hamming space, liable to lose information.
Aiming at measuring the similarity between high-dimensional vectors without losing significant information, we 
propose using $p$-stable distribution~\cite{nolan2003stable} to conduct dimensionality reduction while preserving the original distance. This converts  high-dimensional vectors (states) into low-dimensional representations easier to be handled by the similarity measure.

To address possible hash collision (i.e., dissimilar features may be assigned into the same bucket and recognized as similar),
we introduce the formal definition of the collision probability for LSH. Then, we theoretically analyze the collision probability for $p$-stable distribution to prove that our method has a reasonable boundary for collision probability. 

\begin{dfn}[Collision probability for LSH in $p$-stable distribution]
Given an LSH function $h_{ab}\in\mathcal{H}$ and the probability density function (PDF) of the absolute value of the $p$-stable ($p \in [1,2])$ distribution $f_p(t)$ in $L^p$ space, the collision probability for vectors $\textbf{u}$ and $\textbf{v}$ is represented by:
\begin{align}
    P = Pr[h_{\textbf{ab}}(\textbf{u})=h_{\textbf{ab}}(\textbf{v})] = \int_{0}^w\frac{1}{c}f_p\bigg(\frac{t}{c}\bigg)\bigg(1-\frac{t}{w}\bigg)~dt \label{eq1}
\end{align}
where $c = \|\textbf{u}-\textbf{v}\|_p$ and $w$ is a user-defined fixed distance measure.
\end{dfn}

Here, we use a 2-state distribution, i.e.,~normal distribution for dimension reduction.
We randomly initialize $n_h$ hyperplanes based on normal distribution on the projective space $\mathcal{P}^n$ to get the hash representation for a given state $s$,
where $n$ is the dimension
of the state. The hashing representation $h(s)$ for the given state $s$ is calculated as follows:
\begin{align}
    h_{p\in\mathcal{P}^n}(s) = \{0,1\}^n \text{with} \begin{cases} 
      1 & p_i\cdot s_i >0\\ 
      0 & p_i\cdot s_i \leq 0 ~\label{eq2}
   \end{cases}
\end{align}

The collision probability of the above method can be represented as:
\begin{align}
    & P = \mathit{Pr}[h_{p\in\mathcal{P}^n}(\textbf{u})=h_{p\in\mathcal{P}^n}(\textbf{v})]= 1- \frac{Ang(\textbf{u},\textbf{v})}{\pi} \notag \\ 
    & \text{ where } Ang({\textbf{u},\textbf{v}}) = \arccos\frac{|\textbf{u} \cap \textbf{v}|}{\sqrt{|\textbf{u}|\cdot|\textbf{v}|}} ~\label{eq3}
\end{align}

Eq.(\ref{eq2}) formulates the information loss during the projection, where we use term $e$ to represent the quantification between the real value $p\cdot v$ and hashed results induced from $h(\textbf{v})$.
Since the relative positions in original space are preserved during the hash transformation with an extra measurement $e$, the upper bound and lower bound of collision probability boundary in projective space is guarantee to be intact. That means the more dissimilar states will not receive a higher probability to be allocated into the same hash result.
\begin{lemma}
Given an arbitrary hash function $h_{ab}\in\mathcal{H}$, the collision probability for a given vector $\textbf{u}$ and $\textbf{v}$ is bounded at both ends.
\end{lemma}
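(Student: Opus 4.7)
The plan is to establish both an upper and a lower bound on the collision probability $P$ defined in Eq.~(\ref{eq1}), using only the fact that $f_p$ is the PDF of a nonnegative random variable and that $(1-t/w)$ lies in $[0,1]$ on the integration interval $[0,w]$. This reduces the lemma to elementary monotonicity arguments on the integral, without needing a closed form for the $p$-stable density.

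For the upper bound, I would first note that $0 \le 1 - t/w \le 1$ on $[0,w]$, so
\begin{align*}
P \;\le\; \int_0^w \frac{1}{c}\,f_p\!\bigg(\frac{t}{c}\bigg)\,dt.
\end{align*}
Making the substitution $s = t/c$ yields $P \le \int_0^{w/c} f_p(s)\,ds = F_p(w/c)$, where $F_p$ is the CDF of the absolute value of a $p$-stable variable. Since $F_p(w/c) \le 1$, we immediately get $P \le 1$, and for any finite $c>0$ the heavy tail of $f_p$ guarantees $F_p(w/c)<1$, giving a strict upper bound.

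For the lower bound, both factors $\tfrac{1}{c}f_p(t/c)$ and $(1-t/w)$ are nonnegative on $[0,w]$, so $P \ge 0$ trivially. To upgrade this to a strictly positive bound, I would invoke the standard property that $f_p$ is continuous and strictly positive on $(0,\infty)$ for $p\in[1,2]$; then the integrand is strictly positive on a set of positive measure, so $P$ is strictly positive whenever $c$ is finite. Putting the two halves together gives $0 < P < 1$ for every finite $c>0$, which is exactly the two-sided boundedness claimed. As a sanity check, for the 2-stable instantiation actually used in Eq.~(\ref{eq2}), Eq.~(\ref{eq3}) already gives the closed form $P = 1 - Ang(\textbf{u},\textbf{v})/\pi$, and since $Ang \in [0,\pi]$ the bounds can simply be read off.

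The main obstacle I anticipate is not the mere existence of the two bounds but rather promoting them to the quantitative, LSH-style bounds $(p_1,p_2)$ that the rest of the paper's analysis likely wants: namely, showing that $P$ is a monotonically decreasing function of $c = \|\textbf{u}-\textbf{v}\|_p$, so that $c \le R$ yields $P \ge p_1$ and $c \ge \kappa R$ yields $P \le p_2$. Monotonicity of $F_p(w/c)$ in $c$ is clear, but the correction term coming from $(1-t/w)$ inside the integral needs a careful differentiation-under-the-integral argument to confirm that the full expression in Eq.~(\ref{eq1})---not just its upper envelope---is monotone; this is the step I would spend the most care on.
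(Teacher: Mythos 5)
There is a genuine mismatch between what you prove and what the lemma (as the paper uses it) actually asserts. Your argument establishes only that $0 < P < 1$, which is true of essentially any collision probability and carries no content specific to this construction. The paper's lemma, despite its vague phrasing (``bounded at both ends''), is really about the quantization error $e$ introduced in Eq.~(\ref{eq2}): the hashing step replaces the true projection value $p\cdot v$ by a sign bit, and the text immediately preceding the lemma defines $e$ as the resulting perturbation of the distance $c$. The paper's proof then bounds the collision probability at the perturbed distances $c-e$ and $c+e$ relative to $P$ evaluated at $c$, obtaining the two-sided interval $\bigl[P - \min\bigl(\tfrac{2e}{c+e}, \tfrac{e^2 w \|f_p\|_\infty}{2(c+e)^2}\bigr),\; P + \min\bigl(\tfrac{e}{c-e}, \tfrac{w^2\|f_p\|_\infty}{2(c-e)^2}\bigr)\bigr]$. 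The ingredients are the monotonicity of $P$ in $c$, the change of variables $q = t/(c\pm e)$, and H\"older's inequality in $L^1$ and $L^\infty$ to control the correction integrals. None of this appears in your argument, and the trivial bounds $0<P<1$ cannot substitute for it: the whole point is to show the deviation vanishes as $e/c \to 0$, so that dissimilar states do not end up with a higher collision probability after quantization.

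To repair the proof you would need to start from the perturbed integral $\int_0^{w/(c\mp e)} f_p(q)\bigl(1 - \tfrac{(c\mp e)q}{w}\bigr)\,dq$, expand it as $P$ plus explicit correction terms, and then bound those corrections. Your closing paragraph gestures at the right issue (monotonicity of Eq.~(\ref{eq1}) in $c$), but frames it as an optional upgrade toward $(p_1,p_2)$-sensitivity rather than as the actual content of the lemma; the monotonicity in $c$ is in fact the first line of the paper's proof, and the quantitative perturbation bounds are what the lemma exists to deliver.
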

\begin{proof}
Since $\mathit{Pr}[h_{ab}(\textbf{u}) = h_{ab}(\textbf{v})]$ monotonically decreases in $c$ for any hash function from the LSH family $\mathcal{H}$, the collision probability is bounded from above by $Pr[h_{ab}(\textbf{u}) = h_{ab}(\textbf{v})]$ for $c-e$ and from below by $Pr[h_{ab}(\textbf{u}) = h_{ab}(\textbf{v})]$ for $c+e$.
\begin{align*}
   P=\int_{0}^w\frac{1}{c}f_p\bigg(\frac{t}{c}\bigg)\bigg(1-\frac{t}{w}\bigg)~dt 
    = \int_{0}^{w/c}f_p(q)\bigg(1-\frac{qc}{w}\bigg)~dq \text{ with } q=\frac{t}{c}
\end{align*}

Then, we have the upper bound:
\begin{align*}
    & \int_0^{w/(c-e)}f_p(q)\bigg(1-\frac{(c-e)q}{w}\bigg)dq \text{ with } q=\frac{t}{c-e}\\
  =& \int_0^{w/(c-e)}\bigg(f_p(q)\bigg(1-\frac{qc}{w}\bigg)+\frac{qef_p(q)}{w}\bigg)~dq \\
  \leq & P + \frac{e}{w}\int_0^{w/(c-e)} qf_p(q)~dq 
  \leq  P + \frac{e}{c-e}
\end{align*}
and the lower bound:
\begin{align*}
    & \int_0^{w/(c+e)}f_p(q)\bigg(1-\frac{(c+e)q}{w}\bigg)~dq \text{ with } q=\frac{t}{c+e}\\
    & = \int_0^{w/(c+e)}\bigg(f_p(q)\bigg(1-\frac{qc}{w}\bigg)-\frac{qef_p(q)}{w}\bigg)~dq \\
    & =P -\frac{e}{w}\int_0^{w/(c+e)} qf_p(q)~dq - \int_{w/(c+e)}^{w/c} f_p(q)\bigg(1-\frac{qc}{w}\bigg)~dq \\
  & \geq P- \frac{e}{c+e} - \bigg(1-\frac{c}{c+e}\bigg) 
  = P - \frac{2e}{c+e}
\end{align*}

We compute the upper bound based on H\"{o}lder's inequality in $L^1$ space:
\begin{align*}
    \int_0^{w/(c-e)} qf_p(q)~dq \leq \bigg(\sup_{q\in [0,w/(c-e)]} q\bigg)\|f_p\|_{1} \leq \frac{w}{c-e}
\end{align*}
Considering the $L^\infty$ space, we have:
\begin{align*}
    \int_0^{w/(c-e)} qf_p(q)~dq \leq \|f_p\|_{\infty} \int_0^{w/(c-e)} q~dq= \frac{w^2\|f_p\|_{{\infty}}}{2(c-e)^2}
\end{align*}
We use the similar method in $L^1$ to compute the lower bound:
\begin{align*}
    \int_{w/(c+e)}^{w/c} f_p(q)\bigg(1-\frac{qc}{w}\bigg)~dq 
    & \leq \bigg(\sup_{q\in [w/(c+e),w/c]} \bigg(1-\frac{qc}{w}\bigg)\bigg)\|f_p\|_{1} \\
    &\leq 1-\frac{cw/(c+e)}{w} = \frac{e}{c+e} 
\end{align*}
and in $L^\infty$:
\begin{align*}
    \int_{w/(c+e)}^{w/c} f_p(q)\bigg(1-\frac{qc}{w}\bigg)~dq
    & \leq \|f_p\|_{\infty} \int_{w/(c+e)}^{w/c} \bigg(1-\frac{c}{w}q\bigg)~dq \\
    & \leq \frac{e^2w\|f_p\|_{\infty}}{2c(c+e)^2}
\end{align*}

The collision probability $Pr[h_{ab}(u) = h_{ab}(v)]$ is bounded from both ends as follows:
\begin{align*}
    \bigg[P - \min\bigg(\frac{2e}{c+e}, \frac{e^2w\|f_p\|_{\infty}}{2(c+e)^2}\bigg), P + \min\bigg(\frac{e}{c-e},\frac{w^2\|f_p\|_{{\infty}}}{2(c-e)^2}\bigg)\bigg]
\end{align*}
\end{proof}

Note that, when calculating the lower and upper bounds, $q$ represents $\frac{t}{c+e}$ and $\frac{t}{c-e}$, respectively.
The  algorithm of LSER is shown in ~\Cref{alg:lsh}.


In the following, we demonstrate from two perspectives that LSER can find the similar states efficiency.
First, we show the efficacy of LSER with theoretical guarantee, i.e., similar states can be sampled given the current state. We formulate `the sampling of similar states' as a neighbor-finding problem in the projective space and provide a theoretical proof of
the soundness of LSER.
Given a set of states $\mathcal{S}$, 
and a query $q_s$, LSER can quickly find a state $s\in\mathcal{S}$ within distance $r_2$ or determine that $\mathcal{S}$ has no states within distance $r_1$.
Based on 
existing work~\cite{indyk1998approximate}, the LSH family 
is 
$(r_1,r_2,p_1,p_2)$-sensitive, i.e., we can
find a distribution $\mathcal{H}$ such that $p_1 \geq Pr_{h\sim \mathcal{H}}[h_{ab}(\textbf{u}) = h_{ab}(\textbf{v})]$ when $\textbf{u}$ and $\textbf{v}$ are similar and $p_2 \leq Pr_{h\sim \mathcal{H}}[h_{ab}(\textbf{u}) = h_{ab}(\textbf{v})]$ when $\textbf{u}$ and $\textbf{v}$ are dissimilar.
\begin{theorem}

Let $\mathcal{H}$
be
$(r_1,r_2,p_1,p_2)$-sensitive.
Suppose $p_1 > 1/n$ and $p_2 > 1/n$, where $n$ is the size of data points. There exists a solution for the neighbor finding problem in LSER within $O(n^\rho p_1^{-1} \log n)$ query time, 
and
$O(n^{1+\rho}p_1^{-1})$ space.
\end{theorem}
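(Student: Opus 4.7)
The plan is to prove this via the standard Indyk--Motwani amplification framework, instantiated with the $p$-stable family whose collision probability was analyzed in the preceding lemma. I would set up two layers of amplification: an \emph{AND}-construction that concatenates $k$ independent draws from $\mathcal{H}$ into a composite hash $g(v)=(h_1(v),\ldots,h_k(v))$, and an \emph{OR}-construction that uses $L$ independent such composites $g_1,\ldots,g_L$, each inducing a separate hash table into which all $n$ stored states are inserted during preprocessing. Define $\rho=\log(1/p_1)/\log(1/p_2)$; since $p_1>p_2$ we have $\rho<1$, which is where the sub-linear exponent comes from.

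Next I would tune the parameters. Choosing $k=\lceil\log_{1/p_2} n\rceil$ forces the per-table collision probability of a dissimilar pair to at most $p_2^{k}\le 1/n$, so the expected number of spurious candidates from a single table is $O(1)$; the same choice yields $p_1^{k}\ge n^{-\rho}$ for a genuine near neighbor. To lift this into a constant per-query success probability I would set $L=\Theta(n^{\rho}p_1^{-1})$, so that $1-(1-p_1^{k})^{L}$ is bounded away from $0$ (the hypothesis $p_1>1/n$ keeps $L$ well-defined and nontrivial). The space bound is then immediate: storing $n$ pointers in each of the $L$ tables costs $O(nL)=O(n^{1+\rho}p_1^{-1})$. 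For query time, given $q_s$ one computes the $L$ composite hashes at cost $O(Lk)=O(n^{\rho}p_1^{-1}\log n)$, retrieves the union of the $L$ candidate buckets, and verifies each candidate in $O(1)$ using the cached projective representation from Eq.~(\ref{eq2}); a truncation argument that halts after inspecting $O(L)$ candidates keeps verification within the same budget.

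The main obstacle, and the only genuinely delicate point, will be the joint probability analysis, because the two events ``the true neighbor collides in some $g_i$'' and ``fewer than $cL$ spurious candidates appear across all tables'' are not independent. I would handle this with a union bound combined with Markov's inequality applied to the total spurious-collision count (whose expectation is $O(L)$ by our choice of $k$), using the two-sided boundedness of the collision probability established in the preceding lemma to prevent the constants from degrading. This yields a constant success probability per query, and a standard boosting step that runs $O(\log n)$ independent copies absorbs the remaining failure probability into the $\log n$ factor already present in the stated query-time bound, completing the argument.
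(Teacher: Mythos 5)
Your proposal follows the same route as the paper: both instantiate the standard Indyk--Motwani amplification with $k=\log n/\log(1/p_2)$ concatenated hash functions and $L=O(n^{\rho}p_1^{-1})$ tables, derive the query time from $kL$ and the space from $O(nL)$, and lower-bound the per-query success probability via $1-(1-p_1^{k})^{L}$. Your treatment is in fact slightly more complete, since you also bound the expected number of spurious candidates via Markov's inequality --- a step the paper's query-time analysis silently omits --- but the underlying argument is the same.
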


\begin{proof}
Assume $r_1,r_2,p_1,p_2$ are known, $\rho = \frac{\log (1/p_1)}{\log(1/p_2)}$, and $k=\frac{\log(n)}{\log(1/p_2)}$ where $k$ is the number of hash functions, and LSH initializes $L$ tables. Based on the definition in~\cite{indyk1998approximate}, we have:
\begin{align}
    kL = k\lc p_{1}^{-k} \rc \leq k(e^{\log (1/p_1) \cdot k} + 1) \leq k(n^\rho/p_1 + 1) = O(n^\rho/p_1\log n)
\end{align}

The space complexity is calculated as $O(Lnd_s)$ where $d_s$ is the dimension of state $s$. It can be written as $O(n^{1+\rho}/p_1d_s)$ (by applying $L = n^\rho/p_1$) and further simplified into $O(n^{1+\rho}/p_1)$.

Then, we prove LSER can find similar neighbors. The $L$ table can be classified into two categories: similar and dissimilar. Given a state $s$, the similar category gives similar states while the dissimilar category provides dissimilar states. We split the two categories such that $L = \lfloor n \rfloor + \lc m \rc$ and its corresponding $\lfloor k \rfloor, \lc k \rc$. Given any state $s\in\mathcal{S}$ in the distance $r_1$, LSER must be able to find the most similar states in a high probability---the query and the data need to share the same hash-bucket in one of the tables. The probability of their not sharing the same hash-bucket is
\begin{align}
    (1-p_1^{\lfloor k \rfloor})^{\lfloor n \rfloor}(1-p_1^{\lc k \rc})^{\lc m \rc} 
    & \leq (1-p_1^{\lfloor k \rfloor})^{n-1}(1-p_1^{\lc k \rc})^m \\
    & \leq e^{-np_1^{\lfloor k \rfloor} - mp_1^{\lc k \rc}}(1-p_1^{\lfloor k \rfloor})^{-1} \\
    & = e^{-(np_1^{-1+\alpha} + mp_1^{\alpha})n^{-\rho}}(1-p_1^{\lfloor k \rfloor})^{-1} \\
    & = e^{-1}(1-p_1^{\lfloor k \rfloor})^{-1}
\end{align}
where $\alpha = \lc k \rc - k$. We have applied the definitions $p_1^k = p_2^{\rho k} = n^{-\rho}$ for step 6 to step 7 and $np_1^{-1+\alpha} + mp_1^{\alpha} = n^{\rho}$ for step (7) to (8). Finally, we get the probability of LSER's getting the similar states as follows:
\begin{align*}
    P \geq 1 - e^{-1}(1-p_1^{\lfloor k \rfloor})^{-1} > 0
\end{align*}

Recall that $p_1 > 1/n$. Therefore, we conclude that LSER can find the most similar states.
\end{proof}

\begin{figure}[ht]
    \centering
    \includegraphics[width=0.8\linewidth]{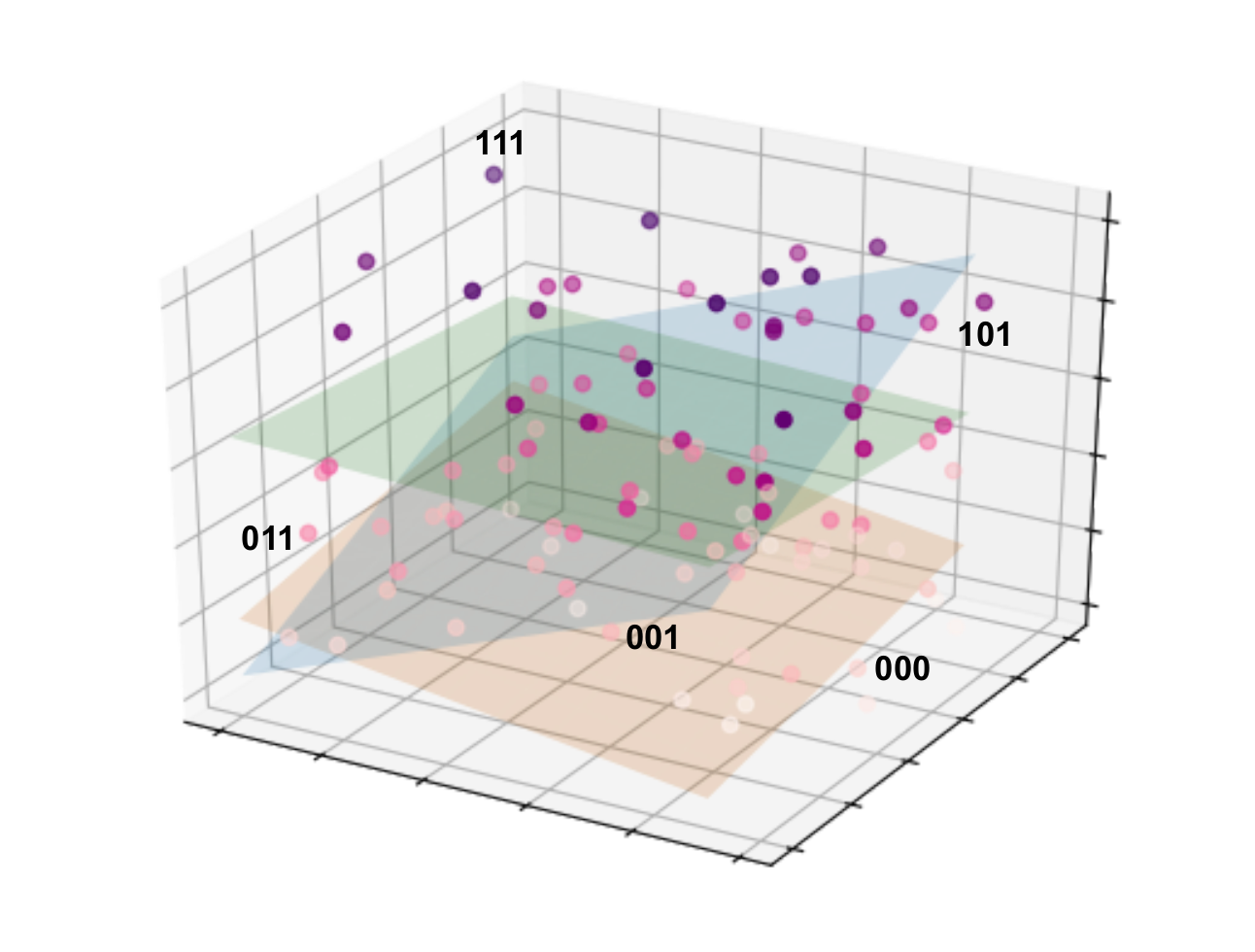}
    \caption{Given a high dimensional space, three random hype-planes are initialized based on normal distribution. Each hype-plane splits the space into two hash areas 0 and 1. The space is split into six hash areas. We can find that, states are encoded into a binary string e.g.,\{111, 101, 011, 001, 000\} }
    \label{fig:lsh}
\end{figure}

\subsection{Store and Sampling Strategy}

Existing experience replay methods in DRL research assume that the 
recent
experience is 
more informative
than older experience. Therefore, they simply
replace the oldest experience with the newest experience 
to
update the experience buffer in DRL-based recomender systems without further optimization.
As such, some valuable experience might be discarded, i.e.,~catastrophic forgetting.
In contrast, we design a state-aware reward-driven experience storage strategy, which removes the experience with the lowest reward---instead of following the First-In-First-Out (FIFO) strategy---when the replay buffer is full.
Formally speaking, a transition $\tau_t:(s_t,a_t,s_{t+1},r_t)$ will be stored 
in
the replay buffer based on the value $h_{p\in\mathcal{P}^n}(\tau_t.s_t)$. If the replay buffer is full, 
the transition with the same value of $h_{p\in\mathcal{P}^n}(\tau_t.s_t)$ but lower reward will be replaced.
In practice, an indicator $m_t$ is stored in the transition as well to indicate when the recommendation should terminate.

Sampling strategy is another crucial component of LSER, which determines which experience should be selected for the agent to optimize in LSER.
We propose a state-aware reward-driven sampling strategy that only replays the experience with the top-N highest rewards in the same hashing area; this way, the agent can quickly find the correct direction for optimization.
We call our sampling strategy `state-aware' because we use a \textit{hash key} to encode the state and replay the experience based on the hash key.
Compared with uniform sampling, our strategy has a higher chance to replay the correct experience.
Here, we illustrate how to address three related challenges faced by our sampling strategy: \textit{exploitation-vs-exploration dilemma}, \textit{bias annealing} and \textit{non-existence dilemma}.

\vspace{3mm}
\noindent{\em Exploitation vs. exploration dilemma.} Exploitation and exploration dilemma is a well-known dilemma when training an agent for RL, including LSER.
T
While our reward-driven strategy forces 
the
agent 
to 
exploit
existing high-rewarding experiences,
the agent may converge to a sub-optimal policy instead of the globally optimal one.
We use a similar method to $\epsilon$-greedy
to achieve a trade-off between exploitation and exploration.
LSER first draws a random probability $p\in[0,1]$ then uses reward-driven sampling if the probability less than a threshold, $\epsilon_{\mathit{max}}$ and random sampling otherwise. The threshold allows LSER to replay low priority experience to fulfill the exploration requirement. 

\vspace{3mm}
\noindent{\em Bias annealing. }
Prioritizing
partial experiences among the replay buffer may introduce 
inductive bias~\cite{schaul2015prioritized,sun2020attentive}---the training process is highly non-stationary (due to 
changing policies); even a small bias introduced by the sampling strategy may change the solution that the policy converges to.
A
common solution is to let
the priority anneal periodically so that the agent can visit those less-replayed experiences.
By using the threshold, our $\epsilon$-greedy method has the similar 
effect as annealing on allowing low-priority experiences to be replayed.

\vspace{3mm}
\noindent{\em Non-existence dilemma. }
When split the projective space into areas to initialize hyperplanes, some areas may not have any data points (esp.~when the number of hyperplanes is large), causing the `non-existence dilemma'.
Consequently, when a new transition comes, the algorithm will stop if 
no experience can be found on $h_{p\in\mathcal{P}^n}$.
We use the similarity measure to overcome this problem. Specifically,~we find the two hash areas that are most similar to each other (based on current $h_{p\in\mathcal{P}^n}$) and conduct sampling on those two states. We use \textit{Jaccard similarity} to measure the similarity between hash codes $A, B$. As such, LSER can always replay the relevant experience.


\subsection{Training Procedure}
We use 
Deep Deterministic Policy Gradient (DDPG)~\cite{lillicrap2015continuous} as the
training backbone. We choose an actor-critic network as the agent and train two parts of the actor-critic network simultaneously.
The critic network aims to minimize the following loss function:
\begin{align*}
    & l(\theta_\psi) = \frac{1}{N} \sum_{j=1}^N ((r + \gamma \xi)-\psi_{\theta_\psi}(s_t,a_t))^2  \\
    & \text{ where } \xi = \psi_{\theta_\psi'}(s_{t+1},\phi_{\theta_\phi'}(s_{t+1}))
\end{align*}
where $\theta_\psi$ 
and
$\theta_\phi$ 
are the critic and actor parameters,
$N$ is the size of 
the
mini-batch from the replay buffer, $\psi_{\theta_\psi'}$ and $\phi_{\theta_\phi'}$ are the target critic and target actor network, respectively.
We apply the Ornstein-Uhlenbeck process in the action space to introduce perturbation; this encourages the agent to explore.
The target network will be updated based on the corresponding hyper-parameter $\tau$.

\section{Experiments}
\begin{algorithm}[!ht]
\SetAlgoLined
 \SetKwInOut{Input}{input}
 \SetKwProg{Fn}{Function}{ is}{end}
 \Input{Transition for storage $\tau:(s_t,a_t,m_t,s_{t+1},r_t)$, capacity $c$, state dimension $d_s$, batch size $b$, Hash bits $n_h$, state $s$ for sampling, epsilon threshold $\epsilon_{max}$.}
  Initialize $n_h$ hyperplanes on projection space $\mathcal{P}_{d_s}$\;
  Initialize empty dictionary $\mathcal{M}$\;
  \Fn{encode(s)}{
    \For{$p$ in $\mathcal{P}_{d_s}$}{
        Calculate hash bits by using Eq.2\;
    }
    \Return{"".join(hash bits)}
  }
  \Fn{Push($\tau$)}{
    $v_h$ = encode($\tau.s_t$) \tcp{get the hash code}
    \eIf{T.size < T.capacity}{
        \eIf{$v_h$ in $\mathcal{M}$}{
            $\mathcal{M}[v_h]$.append($\tau$)\;
        }{
            $\mathcal{M}[v_h]$ = $\tau$ \;
        }
     }
     {
        \If{$v_h$ in $\mathcal{M}$ and $\tau.r_t > \mathcal{M}[v_h][0].r$}{
           \tcp{replace tuple has the minimal reward}
           $\mathcal{M}[v_h][0]$ = $\tau$\;
        }
     }
     Sort($\mathcal{M}[v_h]$) based on reward in ascending order\;
  }
  \Fn{Sample(s,b)}{
    $v_h$ = encode($s$)\;
    $p$ = random.random()\;
    Find two most similar hash values $v_1, v_2$ based on $v_h$\;
    \eIf{$v_h$ in $\mathcal{M}$}{
        \eIf{$p < \epsilon_{max}$}{
            result = $\mathcal{M}[v_h][-b:]$\;
        }{
            result = random.sample($\mathcal{M}[v_h],b$)\;
        }
    }{
        \eIf{$p < \epsilon_{max}$}{
            result = $\mathcal{M}[v_1][-b:] + \mathcal{M}[v_2][-b:]$\;
        }{
            result = random.sample($\mathcal{M}[v_1],b$) + random.sample($\mathcal{M}[v_2],b$)\;
        }
    }
    \Return{result}\;
  } 
 \caption{LSH memory by using dictionary}
 \label{alg:lsh}
\end{algorithm}
\subsection{Online Simulation Platform Evaluation}
We conduct experiments on three widely used public simulation platforms: VirtualTB~\cite{shi2019virtual}, RecSim~\cite{ie2019recsim} and RecoGym~\cite{rohde2018recogym}, which mimic online recommendations in real-world applications.

\vspace{2mm}
\noindent{\bf VirtualTB} is a real-time simulation platform for recommendation, where the agent recommend items based on users' dynamic interests. VirtualTB uses a pre-trained generative adversarial imitation learning (GAIL) to generate different users who have both static interest and dynamic interest. It's worth to mention that, the GAIL is pre-trained by using the real-world from Taobao, which is one of the largest online retail platforms in China. Moreover, the interactions between users and items are generated by GAIL as well. Benefit from that, VirualTB can provide a large number of users and the corresponding interactions to simulate the real-world scenario.

\vspace{2mm}
\noindent{\bf RecSim} is a configurable platform for authoring simulation environments that naturally supports sequential interaction with users in recommender systems. RecSim differs from VirtualTB in containing different, simpler tasks but fewer users and items. There are two different tasks from RecSim, namely interest evolution and long-term satisfaction. The former (interest evolution) encourages the agent to explore and fulfill the user's interest without further exploitation; the latter (long-term satisfaction) depicts an environment where a user interacts with content characterized by the level of `clickbaitiness.' Generally, clickbaity items lead to more engagement yet lower long-term satisfaction, while non-clickbaity items have the opposite effect. The challenge lies in balancing the two to achieve a long-term optimal trade-off under the partially observable dynamics of the system, where satisfaction is a latent variable that can only be inferred from the increase/decrease in engagement.

\vspace{2mm}
\noindent{\bf RecoGym} is a small Open AI gym-based platform, where users have no long-term goals. Different from RecSim and VirtualTB, RecoGym is designed for computational advertising. Similar with RecSim, RecoGym uses the click or not to represent the reward signal. Moreover, similar with RecSim, users in those two environments do not contain any dynamic interests.

Considering RecoGym and RecSim have limited data points and do not consider users' dynamic interests, we select VirtualTB as the main platform for evaluations. Our model is implemented in Pytorch~\cite{paszke2019pytorch} and all experiments are conducted on a server with two Intel Xeon CPU E5-2697 v2 CPUs with 6 NVIDIA TITAN X Pascal GPUs, 2 NVIDIA TITAN RTX and 768 GB memory.
We use two two-hidden-layer neural networks with 128 hidden unit as the actor network and the critic network, respectively. $\tau$, $\gamma$, and $c$ are set to $0.001$, $0.99$ and $1e^6$, respectively, during experiments.

\subsection{Evaluation Metrics and Baselines}
The evaluation metrics are environment-specific.
For VirtualTB and RecoGym, click-through rate is used as the main evaluation metric.
For RecSim, we use the built-in metric, which is a quality score, as the main evaluation metric. We compare our method with the 
following
baselines.
\begin{itemize}
    \item Prioritized Experience Replay (PER)~\cite{schaul2015prioritized}: an experience replay method for discrete control, which uses TD-error to rank experience and a re-weighting method to conduct the bias annealing.
    \item Dynamic Experience Replay (DER)~\cite{luo2020dynamic}: an experience replay method designed for imitation learning, where stores both human demonstrations and previous experience. Those experiences are selected randomly without any priority.
    \item Attentive Experience Replay (AER)~\cite{sun2020attentive}: an experience replay method that
    uses attention to calculate the similarly for boosting sample efficiency with PER.
    \item Selective Experience Replay (SER)~\cite{isele2018selective}: an experience replay method for lifelong machine learning, which employs LSTM as the experience buffer and selectively stores experience.
    \item Hindsight Experience Replay (HER)~\cite{andrychowicz2017hindsight}: an experience replay method that replays two experience (one successful, one unsuccessful) each time. 
\end{itemize}

For AER, PER, SER and HER, We use the same training strategy
as
LSER.
For DER, we use its original structure to run 
experiments without 
human demonstrations.
The size of the replay buffer is set to $1,000,000$ for VirtualTB and $10,000$ for RecSim and RecoGym.
The number of episodes for our experiments is set to $90,000$ for VirtualTB and $1,000$ for RecSim and RecoGym. Note that only PER, AER and SER contains a prioritize operation to rank or store the experience.

\subsection{Results and Evaluation}
\begin{figure*}[ht]
    \centering
    \begin{subfigure}{0.32\linewidth}
    \includegraphics[width=\linewidth]{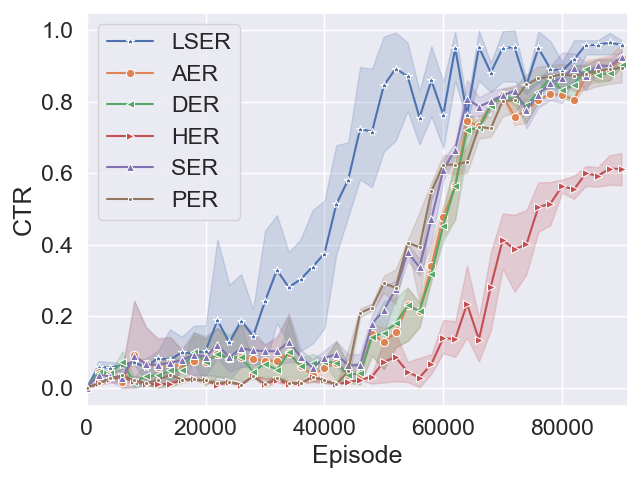}
    \caption{}
    \label{fig:TB}
  \end{subfigure}
  \begin{subfigure}{0.32\linewidth}
    \includegraphics[width=\linewidth]{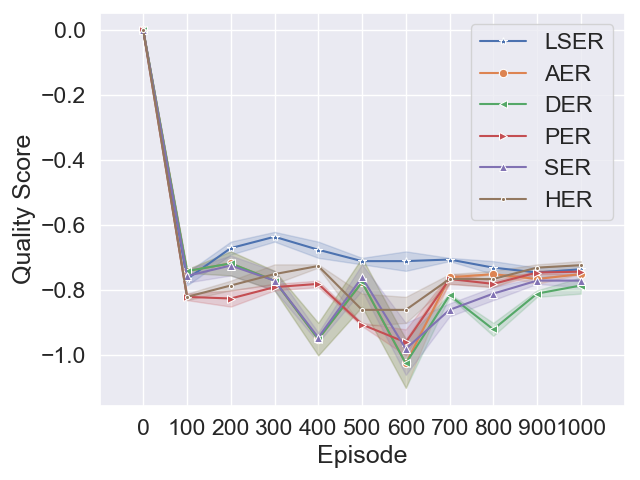}
    \caption{}
     \label{fig:LTS}
  \end{subfigure}
    \begin{subfigure}{0.32\linewidth}
    \includegraphics[width=\linewidth]{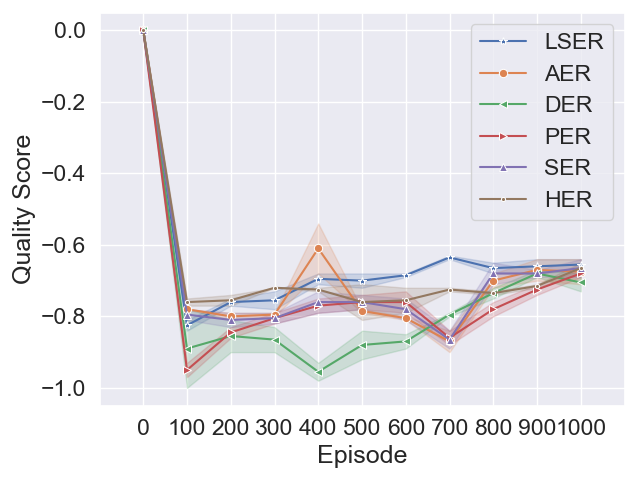}
    \caption{}
    \label{fig:IE}
  \end{subfigure}
  \begin{subfigure}{0.32\linewidth}
    \includegraphics[width=\linewidth]{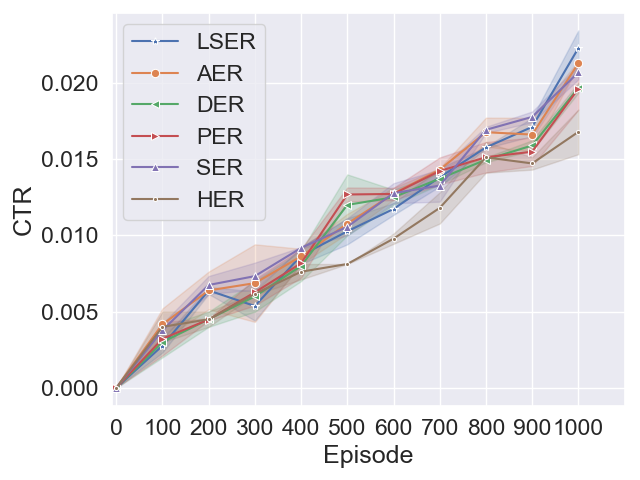}
    \caption{}
    \label{fig:recogym}
  \end{subfigure}
   \begin{subfigure}{0.32\linewidth}
    \includegraphics[width=\linewidth]{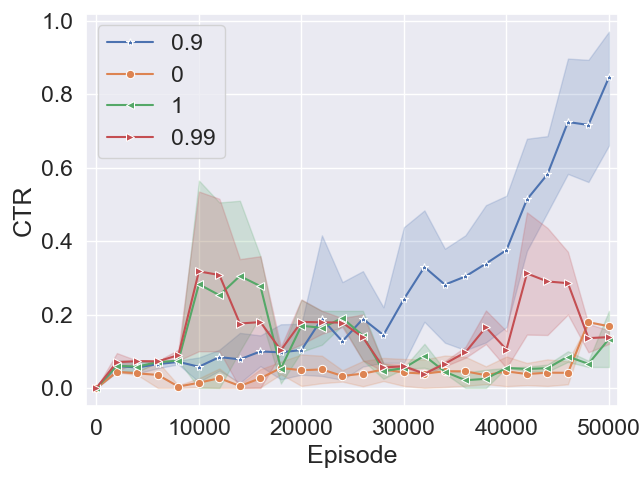}
        \caption{}
    \label{fig:epsilon}
  \end{subfigure}
  \begin{subfigure}{0.32\linewidth}
    \includegraphics[width=\linewidth]{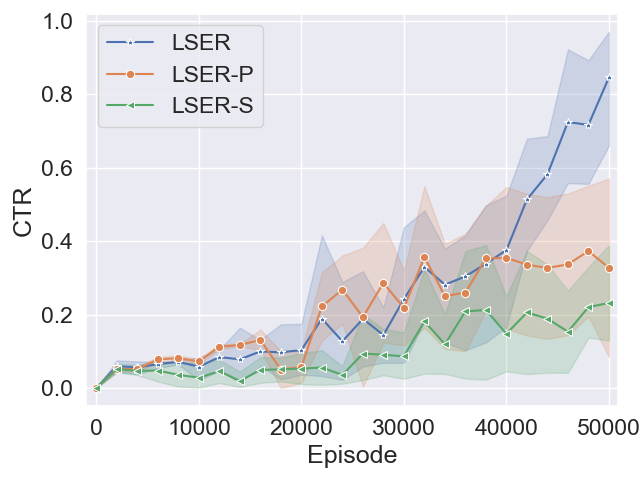}
    \caption{}
    \label{fig:ablation}
  \end{subfigure}
  \caption{Result comparison with four baseline methods on VirtualTB, RecSim and RecoGym. The experiments are repeated five times,
  and mean values are reported.
  95\% confidence intervals are 
  shown.
  (a) is the result for VirtualTB; (b) is the result for long-term satisfaction in RecSim; (c) is the result for interest evolution in RecSim; (d) is the result for RecoGym; (e) is the result for $\epsilon$ study; (f) is the ablation study to show the effectiveness for each component}
  \label{fig:result}
\end{figure*}

Results for the three platforms (Fig~\ref{fig:result}) 
demonstrate our method (LSER) outperformed the baselines: LSER 
yields
significant improvements on VirtualTB, which is a large and sparse environment;
while AER, DER, PER and SER find 
a correct policy within around
$50,000$ episodes, ours
takes around
$30,000$ episodes;
HER does not perform well because 
it introduces too much failed experience and has a slow learning process; 
DER introduces the human demonstration into the vanilla ER which is hard to acquire for recommendation task.


Applying PER to DDPG slightly outperforms applying DER to DDPG, which is consistent with observations by previous work~\cite{novati2019remember,sun2020attentive}.
As PER was originally designed for Deep Q-learning, it uses the high TD-error to indicate the high informative experience for the value-network. When applying PER into DDPG, which is an actor-critic based algorithm, the sampled experience is also used to update the policy network. Those experiences with high TD-error normally diverge far away from the current policy and 
harm 
the updates of 
the
policy-network.
In contrast, LSER selects experience according to the similary with the current state. This preference for on-distribution states tends to discard experiences that contain old states and stabilize the training process of the policy network.
AER does not perform as well as PER in VirtualTB because it heavily relies on the attention mechanism to calculate the similarity score between states. 
LSER's $\epsilon$-greedy method can enforce agent to do more exploration when user's interest shift.

All methods gained similar results on RecSim and RecoGym
because all methods 
can iterate all 
possible combinations of states and actions.
Fig.~\ref{fig:LTS},~\ref{fig:IE}  and~\ref{fig:recogym} show that LSER is slightly better and more stable than 
the baselines on 
RecSim and RecoGym. Since the two platforms are quite small\footnote{RecoGym only contains 100 users and 10 products; RecSim contains 100 users and 20 products.}, similarity matching and $\epsilon$-greedy do not significantly improve 
performance.

\subsection{Running Time Comparison}

We report the running time of 
the
selected experience replay methods in Table~\ref{tab:compare} to evaluate the efficiency of LSER. LSER outperforms 
all 
While performing 
poorly
on RecSim and RecoGym, it is faster than most of the baselines.
In comparison, LSER introduces extra
running time in small environments (e.g, RecSim and RecoGym) than large environments.
For VirutalTB, AER takes much longer time
than all 
other
methods, due to attention 
calculation~\cite{kitaev2020reformer}.

\begin{table}[h]
    \centering
    \caption{Comparison of running time for DER, PER, SER, AER
and LSER coupling with DDPG in three different environments when running the experiments in $90,000$ episodes}
    \begin{tabular}{c|cccc}
    \hline
         & \multicolumn{4}{c}{Running Time ($10^3$ s)}           \\ \cline{2-5} 
         & RecSim(LTS) & RecSim(IE) & RecoGym & VirtualTB \\ \hline
    DER  &   5.63     &  5.42      &      4.53   &     95.22         \\
    PER  &   5.44     &  5.15      &      4.18   &     94.52         \\
    SER  &   5.31     &  5.05      &      4.21   &     \underline{90.05}         \\
    AER  &   \textbf{5.18}     &  \textbf{4.94}      &      \textbf{4.12}  &145.33        \\ 
    HER &    5.33     &  5.11      &      4.20   &     120.33 \\\hline
    LSER &   \underline{5.23}     &  \underline{5.04}      &      \underline{4.15}   &     \textbf{85.12}         \\ \hline
    \end{tabular}
    \label{tab:compare}
\end{table}

\subsection{Ablation Study}
We further investigate the effect of LSER's store and sampling strategy by replacing our store strategy with the normal strategy and our sampling strategy with random sampling.
The results of our ablation study are shown in Fig.~\ref{fig:ablation}, where LSER-P denotes
LSER with the replaced store strategy and LSER-S denotes the LSER with the replaced sampling strategy.
We found the sample strategy played
the
most critical role in achieving good performance, as LSER-S underperformed LSER significantly. 
The store strategy also contributed 
to the better performance. 
LSER-P was less stable (indicated by a wider error bar).
but outperformed LSER at $\sim 30,000$ episodes, due to occurrence of
sub-optimal 
policies.

\subsection{Impact of Number of Hyperplanes}

In our method, the number of hyperplanes is critical to determine the length of the result hash-bits of a given state. Longer hash-bits can provide more accurate similarity measurement result but low efficiency, while shorter hash-bits can increase efficiency but decrease the accuracy. It's a trade-off which needs a middle-point to balance between efficiency and accuracy. We want to answer the following question:\textit{``Does increase the hyperplanes always boost the recommendation performance?''} and find out the optimal number. 

We report the experimental results in VirtualTB, where we evaluate the effect by varying number hyperplanes in LSER (shown in Fig~\ref{fig:hash}). The results on the other two platforms show the similar pattern.
\begin{figure}[h]
    \centering
    \includegraphics[width=0.95\linewidth]{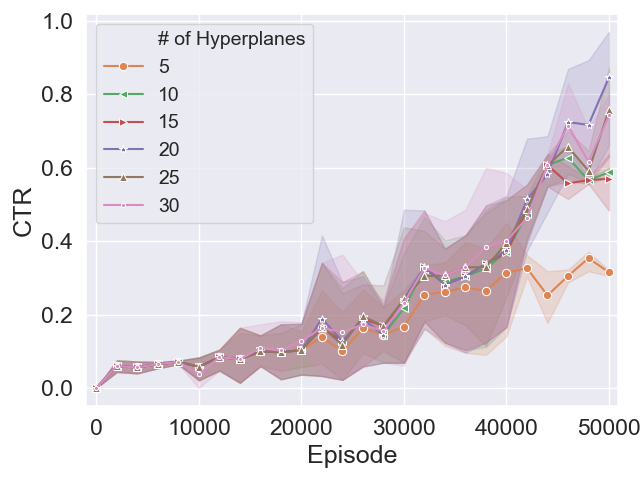}
    \caption{Performance Comparison of different number of hyperplanes}
    \label{fig:hash}
\end{figure}
The performance gradually increases with more hyperplanes, but it levels off or even drops when number of hyperplanes reaches 20.

\subsection{Discussion and Future Extensions}
Fig~\ref{fig:TB} shows LSER suffers instability after reaching the first peak at episode $\sim 50,000$. Different from the other methods, LSER can quickly reach the optimal policy but suffers fluctuation.
That indicates $\epsilon$-greedy tends to lead the agent towards learning from low-priority experience after the optimal policy is reached.
We alleviate the issue by adjusting the value of $\epsilon$.
Here, we tried $\epsilon = \{0, 0.9, 0.99, 1\}$ to determine the best choice of the $\epsilon$ on VirtualTB. The results are shown in Fig.~\ref{fig:epsilon}, where $\epsilon = 1$ corresponds to greedy sampling while $\epsilon=0$ refers to randomly sampling.
Besides, we provide an intervention strategy to 
stabilize the training process---the agent will stop exploration once the test reward is higher than a reward threshold $T_r$. This strategy allows the agent to find an near-optimal policy at an early stage.
We examined the performance under $T_r$=0.95, which delivers a better training process.

\section{Related Work}
\citet{zhao2018deep} first introduced DRL to recommender systems~\citet{zhao2018deep}.
They use DQN to embed user and item information for news recommendaion, where Vanilla ER is used to help 
the
agent learn from 
past experience.
And until present, most
methods use only vanilla ER, which uniformly samples experiences from the replay buffer.
Among them,
~\citet{zhao2020jointly} apply DQN to online 
recommendation and RNN to generate state embeddings; ~\citet{chen2018stabilizing} point out 
that
DQN receives
unstable rewards
in dynamic environments such as online recommendation 
and may harm the
agent;~\citet{chen2019large} 
found
that traditional methods like DQN 
become
intractable when the state 
becomes higher-dimensional; DPG 
addresses the intractability by mapping
high-dimensional discrete state into low-dimensional continuous state~\cite{chen2020knowledge,zhao2020leveraging}.

Intuitively, some 
instances
are more important than others; so a better experience replay 
strategy is to sampling experiences
according to how much current agent can learn from 
each of them.
While such 
a
measure is not
directly accessible, 
proxies propose to retain experiences in the
replay buffer or to sample experiences
from the buffer.
Replay strategies reply on optimization objectives.
In simple continuous control tasks,experience replay should contain 
experiences that are not close to 
the
current policy to prevent fitting to local minima, and the best replay distribution is in between an on-policy distribution and uniform distribution~\cite{de2015importance}. However, 
they~\citet{de2015importance} also note that such a heuristic is unsuitable for complex tasks where policies are updated for many iterations.
In DRL
problems, when the rewards are sparse, 
the
agent can learn from
failed experiences by replacing the original goals with states in
reproduced artificial successful trajectories~\cite{andrychowicz2017hindsight}

For complex control tasks, PER~\cite{schaul2015prioritized} measures the importance of 
experiences using the TD-error and designs a customized 
importance
sampling strategy to avoid the effect of bias. Based on that, Ref-ER~\cite{novati2019remember} actively enforces the similarity between policy and the experience in the replay buffer, considering on-policy transitions are more useful for training the current policy. AER~\cite{sun2020attentive} is an experience replay method that combines the advantages from PER and Ref-ER. It uses attention score to indicate state similarity and replays those experiences awarded high similarity with high priority.
All aforementioned work focuses on optimizing the sampling strategy, aiming to select the salient and relevant agent's experiences in replay buffer effectively. Selective experience replay (SER)~\cite{isele2018selective}, in contrast, aims to optimize the storing process to ensure only valuable experience will be stored. The main idea is to use an Long-short term memory (LSTM) network to store only useful experience.
\section{Conclusion}
In this paper, we propose state-aware reward-driven experience replay (LSER) to address the sub-optimality and training instability issues with reinforcement learning for online recommender systems.
Instead of focusing on improving the sample efficiency for discrete tasks,
LSER considers online recommendation as a continuous task; it then
uses locality-sensitive hashing to determine state similarity and reward for efficient experience replay.
Our 
evaluation of LSER against several state-of-the-art experience-replay methods on three benchmarks (VirtualTB, RecSim and RecoGym) demonstrate LSER's feasibility and superior performance.

In the future, we will explore new solutions for improving stability, such as better optimizers to help the agent get rid of saddle points, new algorithms to stabilize the training for DDPG, and trust region policy optimization to increase training stability~\cite{schulman2015trust}. Moreover, more advance reinforcement learning algorithms could be used to replace the DDPG such as soft actor-critic (SAC)~\cite{haarnoja2018soft} or Twin Delayed Deep Deterministic (TD3)~\cite{fujimoto2018addressing}.
\bibliographystyle{ACM-Reference-Format}
\bibliography{sample-base}

\end{document}